\newtheorem{assumption}{Assumption}
\newtheorem{definition}{Definition}
\newtheorem{lemma}{{\bf Lemma}}
\newtheorem{corollary}{{\bf Corollary}}
\newtheorem{proposition}{{\bf Proposition}}
\newcommand{\R}{\mathbb{R}}
\begin{document}
\begin{frontmatter}

\title{Astrocytic gliotransmission as a pathway for stable stimulation of post-synaptic spiking: Implications for working memory \thanksref{footnoteinfo}} 

\thanks[footnoteinfo]{The work of M. Jafarian is supported by the Marie Skłodowska-Curie Fellowship, project ReWoMeN.}

\author[First]{Valentin Würzbauer}
\author[Second]{Kerstin Lenk}
\author[First]{Matin Jafarian}

\address[First]{Delft Center for Systems and Control, Delft University of Technology, The Netherlands. E-mail: v.wurzbauer@student.tudelft.nl; m.jafarian@tudelft.nl}
\address[Second]{Institute of Neural Engineering, Graz University of Technology, Austria. E-mail: kerstin.lenk@tugraz.at}

\begin{abstract}                
The brain consists not only of neurons but also of non-neuronal cells, including astrocytes. Recent discoveries in neuroscience suggest that astrocytes directly regulate neuronal activity by releasing gliotransmitters such as glutamate. In this paper, we consider a biologically plausible mathematical model of a tripartite neuron-astrocyte network. We study the stability of the nonlinear astrocyte dynamics, as well as its role in regulating the firing rate of the post-synaptic neuron. We show that astrocytes enable storing neuronal information temporarily. Motivated by recent findings on the role of astrocytes in explaining mechanisms of working memory, we numerically verify the utility of our analysis in showing the possibility of two competing theories of persistent and sparse neuronal activity of working memory.
\end{abstract}

\begin{keyword}
Control in neuroscience; Stability of nonlinear systems; Networked systems 
\end{keyword}

\end{frontmatter}
\section{Introduction}
While the role of neurons as the key component in the nervous system has been the subject of tremendous research since Cajal has drawn the first brain cells in the late 19th century, the function of astrocytes as an active partner in neural signaling pathways has only become evident in the last two decades \citep{llinas2003contribution}.  The concept of the tripartite synapse including an astrocyte as well as a pre- and a postsynaptic neuron as an extension to the classical neuron-to-neuron communication has been introduced in \citet{araque1999tripartite}. Until this point, glial cells, the class of cells to which astrocytes are assigned, have only been considered supportive cells that ensure the nutrition and structural support of neurons. 

Communication between neurons occurs in the form of electrical/chemical signals (spiking) via the so-called synapses. Chemical synapses are often enwrapped or closely contacted by astrocytes. Astrocytes’ primary signal for information transfer is calcium. When the astrocyte’s calcium level elevates, it can release transmitter molecules that directly act on neurons. In fact, a sequence of  action potentials triggers the release of messenger substrates of the presynaptic neuron. The neurotransmitters such as glutamate in the synaptic cleft activate membrane channels of both the postsynaptic neurons triggering a postsynaptic potential and the adjacent astrocyte. Then, the astrocytes start a cascade reaction that leads to calcium ($Ca^{2+}$) elevations. The astrocyte dynamics seemingly differ significantly from the neuronal dynamics with respect to space and time. While an action potential and the corresponding synaptic neurotransmission occur in the range of a few milliseconds, the elevation of intracellular $Ca^{2+}$ concentration happens within several seconds. Experiments have shown the release of gliotransmitters by astrocytes in response to high $Ca^{2+}$ levels affecting both the presynaptic and the postsynaptic neuron of the synapse \citep{savtchouk_gliotransmission_2018}. It has to be mentioned that gliotransmission in physiological astrocytes is still a matter of debate today \citep{fiacco2018multiple, savtchouk_gliotransmission_2018}. Recently, \citet{gordleeva2021modeling} and \citet{de2022multiple} have shown that neuron-astrocyte network models with varying biological accuracy are able to store neuronal stimulation via an astrocyte mechanism that enhances synaptic efficacy.

The above findings have motivated the investigation of the role of astrocytes in regulating the neural mechanisms underlying brain functioning e.g. in Parkinson's and epilepsy, as well as cognitive functions such as working memory \citep{gv20,de2022multiple}. Working Memory (WM) is a general-purpose cognitive system responsible for temporarily processing information in service of higher-order cognition such as reasoning and decision-making. 
The neuronal activity within a limited temporal interval is assumed to form the mechanism of storing in formation in the WM \citep{adamsky2018astrocytes}. In fact, several theories exist about the underlying mechanism that stores information. These theories can be divided into two main classes of persistent and sparse neuronal activities \cite{barak2014working}. Both, persistent activity as well as sparse activity, have been observed in WM experiments with primates \citep{funahashi1989mnemonic,lundqvist2018working}. The aforementioned slow reaction of astrocytic gliotransmission is an interesting theory of WM. The recent findings of the regulatory role of astrocytes in WM networks have shed new light on the debates between persistent and sparse activities \citep{gordleeva2021modeling, de2022multiple}. 

In this paper, we study the stability of the nonlinear astrocyte dynamics in a tripartite model and show its effects on modulating the spiking rate of the postsynaptic neuron. We will also numerically verify the obtained insights in a large-scale WM network. Our results show that astrocytes can enable both sparse and persistent neural activities for the WM network.

Compared with the literature, our contribution is twofold. First, different from \citet{gordleeva2021modeling} and \citet{de2022multiple} where the effect of gliotransmission towards the presynaptic neuron has been studied, we consider the integration of slow inward current to the postsynaptic neuron and the examination of persistent and sparse firing depending on the strength of gliotransmission. Second, we perform a stability analysis for a detailed biologically plausible nonlinear model of an astrocyte considered in \citet{gordleeva2021modeling}. We show that the nonlinear model is a positive system and it is ultimately bounded. We charatrize the ultimate bound and show the existence of a locally asymptotic stable equilibrium in the positive orthant. To the best of our knowledge, stability analysis of astrocyte dynamics has only been considered in \citet{de2022multiple} which studied the presynaptic stimulation effects using linearization of a model that is less biologically detailed than the model we consider. Furthermore, we provide numerical results to indicate the implications of our results in supporting the possibility of the co-existence of both sparse and persistent neural activities in WM.

The paper is organized as follows. Section~\ref{sec:2} reviews the tripartite model \citep{gordleeva2021modeling}, and presents our spiking rate model and assumptions. In Section~\ref{sec:3}, both stability analysis and its corresponding numerical validation are presented. A large-scale simulation shows the performance of a neuron-astrocyte network performing WM tasks in Section~\ref{sec:4}. Finally, the paper is concluded in Section~\ref{sec:5}.\\[1mm]
\noindent{\bf{Notations}:} Let $\R_+ = [0, \infty)$, and $\R^{n}_{+}$ is the set of $n$-tuples for which all components belong to $\R_+$. Denote the boundary of $\R^{n}_{+}$ by $bd(\R^n_+)$. 
\begin{definition}[Positive systems]
System $\dot x=f(x(t))$ is positive if and only if $\R^{n}_{+}$ is forward invariant \citep{de2001stability}.
\end{definition}
\section{Modeling} \label{sec:2}
We consider a tripartite synapse including a presynaptic and a postsynaptic neuron as well as an astrocyte as shown in Fig.~\ref{fig:blockalg}. We study the neuron-astrocyte network model by \citet{gordleeva2021modeling} which combines the neuron model by \citet{izhikevich2003simple} and the astrocyte models by \citet{nadkarni2003spontaneous} and \citet{ullah2006anti}.

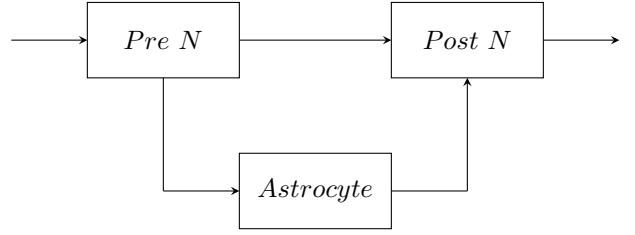
\begin{figure}[h]
    \centering
    \begin{tikzpicture}
        \draw[-stealth] (0,0) -> (1,0);
        \draw (1,-0.5) rectangle (3,0.5)
        node[midway]{$Pre\ N$};
        \draw[-stealth] (3,0) -- (5,0);
        \draw (5,-0.5) rectangle (7,0.5) node[midway]{$Post\ N$};
        \draw[-stealth] (7,0) -> (8,0);
        \draw (3,-2.5) rectangle (5,-1.5) node[midway]{$Astrocyte$};
        \draw[-stealth] (2,-2) -- (3,-2);
        \draw (2,-2) -- (2,-0.5);
        \draw (5,-2) -- (6,-2);
        \draw[-stealth] (6,-2) -- (6, -0.5);
    \end{tikzpicture}
    \caption[Block diagram of the tripartite synapse]{Block diagram representation of a tripartite synapse: The presynaptic neuron (Pre N) signals to the postsynaptic neuron (Post N) and the astrocyte. The astrocyte can release gliotransmitters towards the postsynaptic neuron.}
    \label{fig:blockalg}
\end{figure} 

\subsection{Neuronal Dynamics}
The Izhikevich neuron model \citep{izhikevich2003simple} in combination with the glutamate release dynamics proposed in \citet{gordleeva2012bi} is given by
\begin{equation}\label{eq:neuron}
    \begin{array}{l}
    {\dot V}=0.04 V^2+5 V-U+140+I_{\mathrm{app}}+I_{\mathrm{syn}}+I_{\mathrm{astro}},  \\\\
    {\dot U}=a\left(b V-U\right), \\\\
    {\dot G}=-\alpha_{\mathrm{glu}} G+k_{\mathrm{glu}} \Theta\left(V-30 m V\right),
    \end{array}
\end{equation}
with $V$, $U$, and $G$ denoting the membrane potential, membrane potential's  recovery variable, and the released glutamate in the synaptic cleft where $\Theta(x)$ is the Heaviside step function. Additionally, if $V  \geq 30 \mathrm{mV}$, then the neuron spikes, and $V$ and $U$ are updated to $c$, and $U+d$ respectively. The constant parameters $a$, $b$, $c$, and $d$ are chosen to resemble fast-spiking behavior as it is mostly found in the prefrontal cortex. The currents $I_{app}$, $I_{syn}$, and $I_{astro}$ denote an external current, the synaptic current via neurotransmission, and the astrocytic current via gliotransmission.

\subsection{Astrocytic Dynamics}
We consider a three-state single astrocyte model given by \citet{li1994equations} with extensions by \citet{nadkarni2003spontaneous} and \citet{ullah2006anti}. The model states are $[IP_3]$, $[Ca^{2+}]$, and $h$. The first two describe the intracellular concentration of $IP_3$, which is a second messenger substrate, and $Ca^{2+}$ ions. The parameter $h$ denotes the share of active $IP_3$ receptors connecting the endoplasmic reticulum with the intracellular space. Denoting $[IP_3]$, $[Ca^{2+}]$, and $h$ with $x_1$, $x_2$, and $x_3$ respectively, the astrocyte dynamics \citep{gordleeva2021modeling} obey
\begin{equation}\label{eq:ast1}
    \begin{array}{l}
    {\dot x}_1=\frac{x_1^*-x_1}{\tau_{I P 3}}+\frac{v_4(x_2+(1-\alpha)k_4)}{x_2+k_4}+u(t), \\\\
    {\dot x}_2=-k_1 x_2 + c_1v_1(x_1x_2x_3)^{3} \frac{\frac{c_0}{c_1}-(1+\frac{1}{c_1})x_2}{(x_1+d_1)^{3} (x_2+d_5)^{3}}\\
    \hspace{10mm}-\frac{v_3\ x_2^{2}}{k_3^2+x_2^2} + \frac{v_6\ x_1^{2}}{k_2^2+x_1^2} + c_1v_2(\frac{c_0}{c_1}-(1+\frac{1}{c_1})x_2),\\
    {\dot x}_3=a_2\left(d_2 \frac{x_1+d_1}{x_1+d_3}\left(1-x_3\right)- x_2 x_3\right),
    \end{array}
\end{equation}
where all the coefficients are positive numbers provided in Appendix~\ref{sec:A1}. The system's input, connecting the presynaptic neuron and the astrocyte, is the $IP_3$ production induced by presynaptic glutamate release,  i.e., $u(t)= J_{\mathrm{glu}}$(t), given by
\begin{equation} \label{eq:J_glu}
J_{\mathrm{glu}}= A_{\mathrm{glu}} \Theta (G-G_{thr}), 
\end{equation}
where exceeding a threshold value $G_{\mathrm{thr}}$ of the presynaptic neuron provokes $J_{glu}$\footnote{The condition for glutamate-induced $IP_3$ production $J_{glu}$ for the analysis of the tripartite synapse as described in Equation \eqref{eq:J_glu} is modeled slightly different from \citet{gordleeva2021modeling}}. Finally, the inward current of the postsynaptic neuron under the effect of gliotransmission, $I_{astro}$, is modeled as described by \citet{nadkarni2003spontaneous}. Define $y=x_2/\mathrm{nM}-196.69$, then
\begin{equation} \label{eq:I_astro}
    I_{\text {astro }}=2.11 \Theta(\ln y) \ln y.
\end{equation}
\subsection{Complexity Reduction and Assumptions}
We are interested in the stability analysis of the interconnected model as in Fig.~\ref{fig:blockalg}, where the pre-synaptic neuron provides input to the two other blocks. As shown, the astrocyte dynamics are nonlinear as well as the neuron dynamics. To reduce the complexity of neuronal dynamics, we consider the dynamics of the firing rate of the post-synaptic Izhikevich neuron denoted by $x_4$. We model the firing rate dynamics as 
\begin{equation} \label{eq:x_4}
    \dot{x}_4=-x_4+f(\eta {I}_{astro}-I_{thr}),
\end{equation}
where $f(.)$ is an odd function, $f(0)=0$, $I_{astro}$ represents the astrocytic current, and $I_{thr}$ is a threshold value. The parameter $\eta$ is introduced in order to differentiate between weak and strong gliotransmission \citep{de2011tale}. The derivation of the function $f(\cdot)$ based on the Izhikevich neuron is provided in more detail in Appendix~\ref{sec:A2}.

Moreover, we impose the following mild assumption on the input signal $G_{glu}$ of the astrocyte.
\begin{assumption}
The input signal $J_{glu}(G)$ is a smooth function, such that $0 \leq J_{glu} \leq A_{glu}, \forall G$.
\end{assumption}
\section{Stability Analysis} \label{sec:3}
In this section, the effect of gliotransmission on the postsynaptic neuronal firing rate is studied. First, we focus on the stability the astrocyte dynamics. Second, we investigate the effects of astrocyte output, its second state, on the firing rate of the post-synaptic neuron. We then verify the results via numerical simulation of the extended model, composed of the astrocyte and the post-synaptic neuron, as well as the tripartite model. 
\subsection{Stability Analysis of the Astrocyte Dynamics} \label{sec:3.1}
Define $x=[x_1\ x_2\ x_3]^\top$. The astrocyte dynamics given in \eqref{eq:ast1} admits the following general representation
\begin{equation}\label{eq:ast2}
\dot x= f(x)+ b + B u(t),    
\end{equation}
where $b=[\frac{x_1^*}{\tau_{I P 3}}\  v_2 c_0\  0]^T$, $B=[1\ 0\ 0]^T$, and $f(x)$ captures all nonlinear terms in the dynamics of $x_1,x_2,x_3$ as presented in \eqref{eq:ast1}. Since the input to the system \eqref{eq:ast1} obeys Assumption $1$, we consider it as a bounded positive additive term. We use the following Lemma derived from Property \eqref{p} in \citep{de2001stability} to show the positivity of the astrocyte dynamics.
\begin{lemma} 
System \eqref{eq:ast2} with the input based on Assumption $1$ is positive if and only if 
\begin{equation}\label{p} 
{\bf{P}}: \forall x \in bd(\R^n_+): x_i = 0 \Rightarrow f_i(x) \geq 0.
\end{equation}
\end{lemma}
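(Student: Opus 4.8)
The plan is to read Property $\mathbf{P}$ as the classical subtangentiality (Nagumo) condition for the closed orthant and to apply it to the \emph{full} governing field $F(x,t):=f(x)+b+Bu(t)$ of \eqref{eq:ast2}. First I would recall that the tangent cone of $\R^n_+$ at a boundary point $x$ is $T_{\R^n_+}(x)=\{v\in\R^n:\,v_i\ge 0\ \text{for all } i \text{ with } x_i=0\}$, and that Nagumo's theorem (the essential-nonnegativity characterization cited from \citet{de2001stability}) states that $\R^n_+$ is forward invariant if and only if $F(x,t)$ lies in this cone at every boundary point, i.e. if and only if $F_i(x,t)\ge 0$ whenever $x_i=0$, for every admissible value of the input. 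So the whole argument reduces to controlling the sign of the active components of $F$ on each coordinate hyperplane $\{x_i=0\}$.

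For the sufficiency direction I would exploit the sign structure of the additive terms. Under Assumption~1 the input satisfies $u(t)=J_{\mathrm{glu}}\in[0,A_{\mathrm{glu}}]$, so $Bu(t)=e_1\,J_{\mathrm{glu}}$ has nonnegative entries, and $b=[\,x_1^*/\tau_{IP3},\ v_2c_0,\ 0\,]^\top$ has nonnegative entries since all coefficients are positive. Hence on the face $\{x_i=0\}$ one has $F_i(x,t)=f_i(x)+b_i+(Bu(t))_i\ge f_i(x)$. If $\mathbf{P}$ holds then $f_i(x)\ge 0$ there, whence $F_i(x,t)\ge 0$, and Nagumo yields forward invariance of $\R^n_+$, i.e. positivity. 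I would stress that this check is robust over all admissible inputs, because the worst case $u\equiv 0$ still leaves $F_i\ge f_i\ge 0$; the forcing $b+Bu(t)$ can only strengthen the inward-pointing property.

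The converse is obtained by running Nagumo in reverse: positivity forces $F_i(x,t)\ge0$ on each active face, and the sign structure above returns the condition $\mathbf{P}$ on the nonlinear part $f$. The hard part, and the only genuine subtlety, will be the time-dependence introduced by $u(t)$, since the subtangentiality theorem is classically stated for autonomous fields. I would handle this by treating $b+Bu(t)$ as a bounded, componentwise-nonnegative forcing and verifying the subtangentiality condition uniformly in $t$ (equivalently, for every frozen admissible value of $u$); this is legitimate precisely because $b$ and $Bu(t)$ are nonnegative and therefore can never point out of $\R^n_+$ across any face. This reduction is exactly what allows the invariance test to be phrased on $f$ alone, so that the remaining work is the finite check of the signs of the three components of $f$ in \eqref{eq:ast1} on the coordinate hyperplanes, carried out in the sequel.
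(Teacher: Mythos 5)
Your sufficiency argument is correct and matches how the result is actually used: since $b$ and $Bu(t)$ are componentwise nonnegative under Assumption~1, Property \eqref{p} on $f$ gives $f_i(x)+b_i+(Bu(t))_i\ge 0$ on each face $\{x_i=0\}$, and the subtangentiality (Nagumo) characterization yields forward invariance of $\R^n_+$. Note that the paper offers no proof of this lemma at all --- it is imported by citation from \citet{de2001stability}, where Property \eqref{p} is stated for the full vector field of an unforced system $\dot x=f(x)$ --- and in Proposition~\ref{pr1} the condition is in fact checked on the full right-hand side of \eqref{eq:ast1}, which is consistent with your sufficiency step. Your handling of the time-dependence (freezing admissible values of $u$ and using nonnegativity of the forcing) is also fine.

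The genuine gap is in your converse. Running Nagumo in reverse only gives $F_i(x,t)=f_i(x)+b_i+(Bu(t))_i\ge 0$ whenever $x_i=0$; but here the sign structure works \emph{against} you, since $F_i\ge f_i$ is the wrong direction of inequality. The strongest conclusion available (taking the worst admissible input, $u=0$) is $f_i(x)\ge -b_i$, and because $b_1=x_1^*/\tau_{IP3}>0$ and $b_2=v_2c_0>0$, this is strictly weaker than Property \eqref{p}. The implication ``positive $\Rightarrow$ \eqref{p} on the nonlinear part $f$'' is false in general: take any field whose $i$-th component is identically $-b_i/2$ on the face $\{x_i=0\}$; the forced system is positive, yet \eqref{p} fails for $f$. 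So the biconditional can only be established if \eqref{p} is read as a condition on the whole right-hand side $f(x)+b+Bu$ for each frozen admissible $u$ --- in which case your appeal to the sign of $b+Bu$ is unnecessary and both directions are immediate from Nagumo --- whereas with \eqref{p} imposed on $f$ alone, as you (and, to be fair, the lemma's wording) set it up, the ``only if'' direction does not go through.
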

\begin{proposition}\label{pr1}
The nonlinear system in \eqref{eq:ast1} is positive. That is, $\forall x(0) \in \R^n_+$ it holds that $x(t) \in \R^n_+$.
\end{proposition}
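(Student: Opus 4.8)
The plan is to invoke Lemma 1 directly, reducing positivity to a component-wise sign check on the three coordinate faces of $bd(\R^3_+)$. By that lemma it suffices to verify Property $\mathbf{P}$ for the nonlinear field $f$, i.e. that $f_i(x)\ge 0$ whenever $x_i=0$ and the remaining coordinates are nonnegative. Note that the additive term $b=[\tfrac{x_1^*}{\tau_{IP3}},\,v_2 c_0,\,0]^\top$ has nonnegative entries and the input enters only through $Bu(t)$ with $u=J_{glu}\ge 0$ by Assumption 1; neither appears in $\mathbf{P}$, and both only reinforce invariance, so I may work cleanly with $f_i$ alone.

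Before the face checks, I would record that $f$ is well defined and smooth on all of $\R^3_+$: every denominator occurring in \eqref{eq:ast1} — namely $x_2+k_4$, $k_3^2+x_2^2$, $k_2^2+x_1^2$, $x_1+d_3$, and $(x_1+d_1)^3(x_2+d_5)^3$ — is strictly positive on the closed orthant because the $k$'s and $d$'s are positive constants. Hence no singularity is met on $\R^3_+$ and solutions are well posed, so the Nagumo-type characterization underlying Lemma 1 applies.

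I would then check the three faces in turn. On $\{x_1=0\}$ the decay term $-x_1/\tau_{IP3}$ vanishes and $f_1=\frac{v_4(x_2+(1-\alpha)k_4)}{x_2+k_4}\ge 0$, provided $0\le\alpha\le 1$ so that $(1-\alpha)k_4\ge 0$; this holds since $\alpha$ is a physiological fraction. On $\{x_2=0\}$ the loss terms $-k_1 x_2$ and $-v_3 x_2^2/(k_3^2+x_2^2)$ vanish, and — the key point — the large nonlinear flux term vanishes as well, since it carries the factor $(x_1 x_2 x_3)^3$ in its numerator, so what survives is $f_2=\frac{v_6 x_1^2}{k_2^2+x_1^2}\ge 0$. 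On $\{x_3=0\}$ the term $-a_2 x_2 x_3$ drops out and $f_3=a_2 d_2\frac{x_1+d_1}{x_1+d_3}\ge 0$ because $x_1\ge 0$ and $d_1,d_3>0$.

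Having established $f_i\ge 0$ on each face $\{x_i=0\}$, Property $\mathbf{P}$ holds and Lemma 1 yields forward invariance of $\R^3_+$, i.e. positivity. The one step needing genuine care is the $x_2$-face: although $f_2$ contains the possibly-negative nonlinear flux expression, the common cubic factor $(x_1 x_2 x_3)^3$ annihilates that entire term at $x_2=0$, leaving only manifestly nonnegative production. The remaining two checks are immediate sign inspections once $\alpha\le 1$ and the strict positivity of the denominators are invoked.
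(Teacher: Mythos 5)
Your proof is correct and follows essentially the same route as the paper's: both verify Property $\mathbf{P}$ of Lemma~1 by a face-by-face sign check on $bd(\R^3_+)$, with the decisive observation that the cubic flux term vanishes at $x_2=0$ and the remaining terms are manifestly nonnegative. Your version is in fact slightly more careful than the paper's (noting smoothness of $f$ on the closed orthant, the need for $\alpha\le 1$, and the clean separation of $f$ from the nonnegative terms $b$ and $Bu$), but there is no substantive difference in approach.
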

\begin{proof}
We verify that system \eqref{eq:ast1} satisfies the property in Lemma $1$. Since all coefficients in \eqref{eq:ast1} are positive, substituting $x_2=0$ in \eqref{eq:ast1} gives ${\dot x}_2 >0$. Thus, $x_2(t) \geq 0$ holds. Substituting $x_1=0$ and $x_2 \geq 0$ in $x_1$ dynamics, gives ${\dot x}_1 >0$, hence $x_1(t) \geq 0$. Similarly, we conclude that $x_3(t) \geq 0$, which ends the proof.  
\end{proof}
We now continue by proving uniform ultimate boundedness (Definition 4.6, and Theorem 4.18 \citep{khalil2015nonlinear}) for system \eqref{eq:ast1}. We benefit from the positivity of the system in conducting the proof.
\begin{proposition}\label{pr2}
Consider astrocyte dynamics in \eqref{eq:ast1} with $x(0) >0$, under Assumption~$1$. The system is uniformly ultimately bounded, and its solution converges to the set $$\Omega=\{x \in {\mathcal R}^{+}_{3}: 0 \leq x_1 \leq \mu_1; 0 \leq x_2 \leq \mu_2; 0 \leq x_3 \leq 1\},$$ with $\mu_1=x_1^*+\tau_{I P 3}(v_4+A_{\mathrm{glu}})$ and $\mu_2=\frac{v_6+ c_0 (v_1-v_2)}{k_1+v_2(1+c_1)}$.
\end{proposition}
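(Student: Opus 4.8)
The plan is to exploit positivity (Proposition~\ref{pr1}) and then to bound each coordinate by a one-dimensional differential inequality, invoking the comparison lemma so that the $\limsup$ of each state lands in the claimed interval; finally I would verify that the resulting box $\Omega$ is forward invariant, which together with the initial-condition-independent bounds yields uniform ultimate boundedness in the sense of Definition~4.6 and Theorem~4.18 of \citep{khalil2015nonlinear}. A key structural observation is that the three coordinates form a cascade: by positivity every state is nonnegative, the bound on $x_3$ needs only $x_2\ge 0$, the bound on $x_1$ needs only $x_2\ge 0$, and the bound on $x_2$ needs $x_3\le 1$. I would therefore treat $x_3$ and $x_1$ first and $x_2$ last.

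For $x_3$, I would evaluate $\dot x_3$ on the face $x_3=1$: the factor $(1-x_3)$ vanishes and, since $x_2\ge 0$, one gets $\dot x_3=-a_2 x_2\le 0$; for $x_3>1$ both terms are negative, so $\dot x_3<0$. Hence $[0,1]$ is invariant and attracting, giving $0\le x_3\le 1$ asymptotically. For $x_1$, I would note that the Hill-type term $\tfrac{v_4(x_2+(1-\alpha)k_4)}{x_2+k_4}$ is increasing in $x_2\ge0$ with supremum $v_4$, and that $u=J_{\mathrm{glu}}\le A_{\mathrm{glu}}$ by Assumption~1; this yields the linear inequality $\dot x_1\le \tfrac{x_1^{\ast}-x_1}{\tau_{IP3}}+v_4+A_{\mathrm{glu}}$, whose comparison solution converges to $\mu_1=x_1^{\ast}+\tau_{IP3}(v_4+A_{\mathrm{glu}})$, so $\limsup x_1\le\mu_1$.

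The crux is the $x_2$ bound. Here I would first drop the pump term $-\tfrac{v_3 x_2^2}{k_3^2+x_2^2}\le 0$ and bound the source $\tfrac{v_6 x_1^2}{k_2^2+x_1^2}\le v_6$. The delicate part is the channel flux together with the leak, which share the common driving factor $D(x_2):=\tfrac{c_0}{c_1}-(1+\tfrac1{c_1})x_2$; writing $\psi:=\tfrac{(x_1x_2x_3)^3}{(x_1+d_1)^3(x_2+d_5)^3}$, their sum is $c_1(v_1\psi+v_2)D(x_2)$. Using the saturations $\tfrac{x_1}{x_1+d_1}\le1$, $\tfrac{x_2}{x_2+d_5}\le1$ and $x_3\le1$ gives $0\le\psi\le1$, and I would then bound $c_1(v_1\psi+v_2)D(x_2)$ by tracking the sign of $D$: this is where care is needed, because $\psi\le1$ is the useful estimate while $D\ge0$ (small $x_2$), whereas $\psi\ge0$ is the useful one while $D<0$ (large $x_2$, the regime that fixes the ultimate bound). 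Assembling the estimates produces a linear differential inequality $\dot x_2\le -\big(k_1+v_2(1+c_1)\big)x_2+\kappa$, with the constant $\kappa$ collecting the channel, leak and source contributions; matching the comparison equilibrium to the claimed bound pins down $\kappa=v_6+c_0(v_1-v_2)$, and the comparison lemma gives $\limsup x_2\le\mu_2$.

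Finally, I would record that all three bounds are independent of $x(0)$ and that on each face of $\Omega$ the vector field points inward (exactly the computations above evaluated at $x_i=\mu_i$, respectively $x_3=1$), so $\Omega$ is forward invariant and globally attracting from $\R^3_+$, which is precisely uniform ultimate boundedness with ultimate bound $\Omega$. I expect the main obstacle to be the channel-plus-leak estimate in the $x_2$ equation: extracting a \emph{linear} upper envelope from the cubic, sign-indefinite flux requires both the $[0,1]$ bound on the open-probability factor $\psi$ (which is why $x_3\le1$ must be established first) and a careful case split on the sign of the calcium driving force $D(x_2)$.
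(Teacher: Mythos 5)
Your proposal follows the same skeleton as the paper's proof: positivity (Proposition~\ref{pr1}) is used first, the bound $x_3\in[0,1]$ is established before anything else, the Hill-type nonlinearities are then bounded by constants, and finally linear estimates for $x_1$ and $x_2$ produce $\mu_1$ and $\mu_2$. The only difference in machinery is that the paper packages the coordinate-wise estimates into the quadratic Lyapunov function $V=\tfrac{1}{2}(x_1^2+x_2^2+x_3^2)$, split as $V_3$ and $V_{1,2}$, and shows $\dot V_{1,2}\le -\tfrac{x_1^2}{\tau_{IP3}}+x_1\bigl(\tfrac{x_1^*}{\tau_{IP3}}+v_4+J_{\mathrm{glu}}\bigr)-\beta_1 x_2^2+(v_6+c_0(v_1-v_2))x_2$, whereas you run scalar comparison inequalities; since $x_i\dot x_i<0$ for $x_i>0$ is the same statement as $\dot x_i<0$, this is a cosmetic difference.

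Where you genuinely depart from the paper is the channel-plus-leak estimate in the $x_2$ equation. The paper does not case-split: it keeps the leak's constant contribution as the separate constant $\beta_2=c_0v_2$ in \eqref{eq:system} and majorizes the channel flux uniformly via $N_2\le c_1v_1$ and $D\le c_0/c_1$, i.e.\ $N_2D\le c_0v_1$, which is exactly how the constant $v_6+c_0(v_1-v_2)$ in the statement arises. (Note that this constant depends on the sign of $\beta_2$ as written in \eqref{eq:system}; expanding $c_1v_2\bigl(\tfrac{c_0}{c_1}-(1+\tfrac{1}{c_1})x_2\bigr)$ actually gives $+c_0v_2$, with which the same uniform estimate yields $v_6+c_0(v_1+v_2)$ instead.) Your grouping $c_1(v_1\psi+v_2)D$ with a sign split on $D$ is sharper and sidesteps this issue entirely: for $D<0$ you get $\dot x_2\le -\beta_1x_2+c_0v_2+v_6$, and for $D\ge 0$ an even larger decay rate $k_1+(v_1+v_2)(1+c_1)$, so $\dot x_2<0$ already for $x_2$ above roughly $\max\bigl\{c_0/(1+c_1),\,(c_0(v_1+v_2)+v_6)/(k_1+(v_1+v_2)(1+c_1))\bigr\}$, which lies far inside the claimed $\mu_2$. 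The one step you should not keep as written is ``matching the comparison equilibrium to the claimed bound pins down $\kappa=v_6+c_0(v_1-v_2)$'': your estimates do not produce that constant, and reverse-engineering it is both unnecessary and unjustified. Simply conclude that your (smaller) bound on $\limsup_{t\to\infty}x_2$ implies convergence to $\Omega$ a fortiori; with that phrasing your argument proves Proposition~\ref{pr2} as stated, and in fact a slightly stronger result.
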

\begin{proof}
Let us first rewrite the dynamics in \eqref{eq:ast1} as follows
\begin{equation}\label{eq:system}
    \dot{x}=   
    \left[\begin{array}{c}
    \frac{x_1^*- x_1}{\tau_{I P 3}}+ N_1 +J_{\mathrm{glu}} \\
    - \beta_1 x_2 -\beta_2 + N_2 (\frac{c_0}{c_1}-(1+\frac{1}{c_1})x_2)- N_3 + N_4\\
    a_2 (N_5(1-x_3)-x_2 x_3)
    \end{array}\right]
\end{equation}
where $\beta_1=k_1+v_2(1+c_1)$, $\beta_2=c_0 v_2$, and each $N_i$ denotes a bounded nonlinear term in \eqref{eq:ast1}. Now, consider the Lyapunov function $V(x)= \frac{1}{2} (x_1^2+x_2^2+x_3^2)$. First, define $V_3= \frac{1}{2} x_3^2$, where ${\dot V}_3= x_3 {\dot x}_3= a_2 N_5 x_3 (1-x_3)-a_2x_2x_3^2$. Since $0< \frac{d_1 d_2}{d_3} \leq N_5 \leq d_2$, and based on Proposition \ref{pr1},  $x_2 \geq 0, x_3 \geq 0$ hold, we conclude that ${\dot V}_3 <0$ if $x_3 >1$. Thus, $x_3$ converges to the interval $[0,1]$. We continue by computing the bounds of each $N_i$, considering $x_1, x_2 \geq 0, 0\leq x_3 \leq 1$. We obtain $(1-\alpha) v_4 \leq N_1 \leq v_4; 0 \leq N_2 \leq c_1 v_1; 0 \leq N_3 \leq v_3; 0 \leq N_4 \leq v_6$. Computing the derivative of $V_{1,2}(x)= \frac{1}{2} (x_1^2+x_2^2)$, gives 
\begin{equation}
    \begin{array}{l}
    {\dot V}_{1,2} \leq -\frac{x_1^2}{\tau_{I P 3}} + x_1(\frac{x_1^*}{\tau_{I P 3}}+v_4+J_{\mathrm{glu}})\\\\
    \hspace{8mm} - \beta_1 x_2^2 +(v_6+c_0(v_1-v_2)) x_2.
    \end{array}
\end{equation}
So, if $x_1 > x_1^*+\tau_{I P 3}(v_4+J_{\mathrm{glu}})$ and $x_2 > \frac{v_6+ c_0 (v_1-v_2)}{k_1+v_2(1+c_1)}$, then ${\dot V}_{1,2} <0$ holds. The latter two conditions together with $x_3 >1$ guarantee that $\dot V\leq 0$ holds, since $V=V_{1,2}+V_3$. Thus, the solution of the system converges to the set $\Omega$ which ends the proof.
\end{proof}
As we discussed in the proof of Proposition \ref{pr2}, the nonlinear terms in \eqref{eq:ast1} are bounded. Hence, system \eqref{eq:ast2} can be approximated by the following linear system
\begin{equation}\label{lin} 
\dot x \approx  C x+ d + b+ B u(t),
\end{equation} 
where $B$ and $b$ are as defined for \eqref{eq:ast2}, and $d$ represents a constant vector approximating the nonlinear terms. Notice that matrix $C$ is a negative definite diagonal matrix, which guarantees local asymptotic stability of the equilibrium of the system \eqref{lin}. 
\begin{corollary} 
There exists a locally asymptotically stable equilibrium belonging to $\R^n_+$ for the system \eqref{eq:ast2}.
\end{corollary}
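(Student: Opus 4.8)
The plan is to treat existence and local stability as two separate claims, using the topological content of Propositions~\ref{pr1}--\ref{pr2} for the former and the linear structure of \eqref{lin} for the latter.

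For existence, I would first upgrade Proposition~\ref{pr2} from ``attracting'' to ``forward invariant'' on the box $\Omega$. Checking the vector field of \eqref{eq:ast1} face by face: on $x_3=1$ one has $\dot x_3=-a_2 x_2\le 0$; on $x_1=\mu_1$ the bounds $N_1\le v_4$ and $J_{\mathrm{glu}}\le A_{\mathrm{glu}}$ give $\dot x_1\le 0$; on $x_2=\mu_2$ the same estimate used in Proposition~\ref{pr2} yields $\dot x_2\le -\beta_1\mu_2+(v_6+c_0(v_1-v_2))=0$; and the faces $x_i=0$ point inward by Proposition~\ref{pr1}. Hence $\Omega$ is a compact, convex, forward-invariant set. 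I would then invoke the fact that a continuous flow leaving a nonempty compact convex set invariant admits an equilibrium there. Concretely, Brouwer's fixed-point theorem applied to the time-$(1/n)$ map $\phi_{1/n}:\Omega\to\Omega$ produces $x_n$ with $\int_0^{1/n}\big(f(\phi_s(x_n))+b+Bu\big)\,ds=0$; dividing by $1/n$ and passing to a convergent subsequence $x_n\to\bar x\in\Omega$ gives $f(\bar x)+b+Bu=0$, i.e.\ an equilibrium $\bar x\in\Omega\subseteq\R^n_+$.

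For local asymptotic stability I would use Lyapunov's indirect method (Theorem~4.7, \citep{khalil2015nonlinear}): it suffices to show the Jacobian $A=\partial f/\partial x|_{\bar x}$ is Hurwitz. Its diagonal consists of the decay rates $-1/\tau_{IP3}$, $-(k_1+v_2(1+c_1))$ augmented by the derivatives of the saturating terms, and $-a_2(N_5+x_2)$, all strictly negative on $\Omega$; these are precisely the entries gathered in the negative-definite diagonal matrix $C$ of \eqref{lin}. I would then bound the off-diagonal coupling entries of $A$ and apply Gershgorin's circle theorem (equivalently, verify strict diagonal dominance) to place every eigenvalue of $A$ in the open left half-plane, which yields local asymptotic stability of $\bar x$.

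The main obstacle is the stability step, not existence. The approximation \eqref{lin} replaces the bounded nonlinearities by a \emph{constant} vector $d$, so $C$ is not literally the Jacobian: the true off-diagonal entries are the partial derivatives of the $IP_3$- and $Ca^{2+}$-dependent Hill/saturating terms, which do not vanish at $\bar x$. The delicate point is therefore to show that at the equilibrium these coupling derivatives are small enough relative to the diagonal decay rates for Gershgorin's theorem to apply. Should diagonal dominance fail for the computed $A$, the fallback is to construct a tailored quadratic Lyapunov function $x^\top P x$ solving $A^\top P+PA\prec 0$, thereby certifying that $A$ is Hurwitz directly.
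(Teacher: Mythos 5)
Your proposal splits into an existence half and a stability half, and they fare very differently when compared with the paper.

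The existence half is correct and is a genuinely different --- and stronger --- route than the paper's. The paper never actually proves existence: it replaces the bounded nonlinearities of \eqref{eq:ast2} by a constant vector $d$, notes that the diagonal matrix $C$ in the resulting approximation \eqref{lin} is negative definite, and otherwise leans on the numerically computed equilibria of Appendix~\ref{sec:A3}. Your argument --- checking that the vector field points inward (or is tangent) on every face of the box $\Omega$ of Proposition~\ref{pr2}, and then extracting an equilibrium via Brouwer's theorem applied to the time-$(1/n)$ maps --- is a complete, standard proof that a compact convex forward-invariant set of a continuous flow contains an equilibrium. Two caveats, both minor: the argument requires $u$ to be constant (an equilibrium is meaningless otherwise; the paper's numerics use $u=0$ and $u=A_{\mathrm{glu}}$), and your face check at $x_2=\mu_2$ inherits the estimate of Proposition~\ref{pr2} as given.

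The stability half contains a genuine gap, and it is exactly the step you flagged as delicate. You are right that $C$ is not the Jacobian of \eqref{eq:ast2} at $\bar x$ --- this is the weakness of the paper's own argument --- but your primary repair, Gershgorin/strict diagonal dominance applied to the true Jacobian $A$, provably cannot succeed at the stimulated equilibrium. By the paper's Appendix~\ref{sec:A3}, for $u=A_{\mathrm{glu}}$ the eigenvalues of $A$ are $-0.14$ and $-0.27\pm 0.89i$, so $\mathrm{tr}(A)=-0.68$. If all Gershgorin discs lay in the open left half-plane, the disc containing $-0.27+0.89i$ would have center $c$ (a diagonal entry of $A$) with $(c+0.27)^2+0.89^2<c^2$, i.e.\ $c<-1.60$; the trace then forces some other diagonal entry to exceed $+0.46$, and its disc cannot lie in the left half-plane. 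Since a diagonal similarity leaves the diagonal entries unchanged, no weighted-dominance variant escapes this either: the large imaginary parts are the signature of $IP_3$--$Ca^{2+}$--$h$ coupling that is not small relative to the diagonal decay rates. That leaves only your fallback --- compute $A$ at the numerically found equilibrium and solve $A^\top P+PA\prec 0$ --- which is a legitimate computational certificate but no longer an analytical proof; it lands you precisely where the paper already stands, namely local stability supported by the numerics of Appendix~\ref{sec:A3}. So, as written, your proposal proves existence (improving on the paper) but does not prove local asymptotic stability.
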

Numerical examples corresponding to the above result are given in the Appendix~\ref{sec:A3}.
\subsection{Stability of firing rate of the post-synaptic neuron}
We now continue by connecting the astrocyte output to the input of the postsynaptic neuron (as shown in Fig.~\ref{fig:blockalg2}). 
\begin{figure}[h]
    \centering
    \begin{tikzpicture}
        \draw[-stealth] (0,0) -> (1,0) node[above,xshift=-1cm]{$u = J_{glu}$};
        \draw (1,-0.5) rectangle (3,0.5) node[midway]{$Astrocyte$};
        \draw[-stealth] (3,0) -> (4,0) node[above,xshift=-0.5cm]{$x_2$};
        \draw (4,-0.5) rectangle (6,0.5) node[midway]{$Neuron$};
        \draw[-stealth] (6,0) -> (7,0) node[above,xshift=0cm]{$y = x_4$};
    \end{tikzpicture}
    \caption[Block diagram of the extended astrocyte model]{Block diagram of the extended astrocyte model}
    \label{fig:blockalg2}
\end{figure}
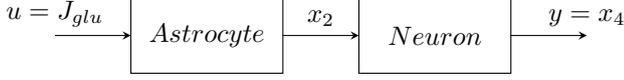
Based on the dynamics of $x_4$ in \eqref{eq:x_4}, since for $\eta I_{astro} < I_{thr}$, the map $f(.)$ gives zero, then $x_4=0$ is asymptotically stable. Also, $x_4$ is bounded for bounded $I_{astro}$, which is a function of $x_2$,  hence $x_4$ is input-to-state stable \citep{khalil2015nonlinear}. In particular for constant $\eta I_{astro} > I_{thr}$, $x_4$ converges to an asymptotically stable equilibrium in ${\R}_{+}$. The following Corollary summarizes these statements.
\begin{corollary}
 
The firing-rate of the post-synaptic neuron in \eqref{eq:x_4} is input-to-state stable. Moreover, there exists an asymptotically stable equilibrium in ${\R}_{+}^{n}$ for the extended system composed of \eqref{eq:x_4} and \eqref{eq:ast2}.
\end{corollary}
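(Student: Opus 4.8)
The plan is to exploit the cascade structure of the extended system evident in Fig.~\ref{fig:blockalg2}: the astrocyte block \eqref{eq:ast2} drives the firing-rate block \eqref{eq:x_4} through the one-directional coupling $x_2 \mapsto I_{astro} \mapsto$ input of $x_4$, with no feedback from $x_4$ to the astrocyte. I would therefore split the argument into (i) an input-to-state stability (ISS) estimate for the scalar firing-rate subsystem with $x_2$ (equivalently $I_{astro}$) treated as an exogenous input, and (ii) a cascade argument combining this ISS property with the local asymptotic stability of the astrocyte equilibrium granted by Corollary~1.

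For step (i), I would first note that by Proposition~\ref{pr2} the astrocyte state is ultimately bounded, so $x_2$ stays bounded; since $I_{astro}$ in \eqref{eq:I_astro} is a function of $x_2$, the driving term $g(t):=f(\eta I_{astro}-I_{thr})$ is a bounded, piecewise-continuous signal. Taking $V_4=\tfrac12 x_4^2$ and using Young's inequality yields $\dot V_4 = -x_4^2 + x_4 g \le -\tfrac12 x_4^2 + \tfrac12 g^2$, which is an ISS-Lyapunov inequality with a quadratic class-$\mathcal{K}_\infty$ dissipation rate $\alpha(r)=\tfrac12 r^2$ and input gain $\sigma(r)=\tfrac12 r^2$; invoking the standard ISS theorem \citep{khalil2015nonlinear} then establishes ISS of \eqref{eq:x_4}. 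In particular, when $\eta I_{astro}<I_{thr}$ the input vanishes and $x_4=0$ is globally exponentially stable, while for a constant super-threshold input $x_4$ converges to the unique equilibrium $x_4^\star = f(\eta I_{astro}-I_{thr})\ge 0$.

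For step (ii), I would locate the equilibrium of the full system by taking the astrocyte equilibrium $x^\star=(x_1^\star,x_2^\star,x_3^\star)\in\R^3_+$ from Corollary~1; freezing $x_2=x_2^\star$ produces a constant $I_{astro}^\star$ and hence the firing-rate equilibrium $x_4^\star=f(\eta I_{astro}^\star-I_{thr})$, so that $(x^\star,x_4^\star)$ is an equilibrium of the extended system lying in $\R^4_+$ (since $x^\star\in\R^3_+$ and $x_4^\star\ge 0$). Local asymptotic stability then follows from the cascade principle: a locally asymptotically stable driving subsystem feeding an ISS driven subsystem gives a locally asymptotically stable interconnection \citep{khalil2015nonlinear}. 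Combining the local asymptotic stability of the astrocyte block with the ISS of the firing-rate block yields local asymptotic stability of $(x^\star,x_4^\star)$ in the positive orthant.

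The main obstacle I anticipate is the non-smoothness of the coupling map: $I_{astro}$ in \eqref{eq:I_astro} contains both a Heaviside switch and a logarithm, so $g(t)$ is only piecewise continuous and is non-differentiable at the threshold value of $x_2$ where $\ln y=0$. This requires care in two places — the ISS estimate must be read in the sense of bounded, measurable inputs rather than smooth ones, and the cascade theorem is cleanest for smooth vector fields, so I would restrict attention to a neighborhood of $x_2^\star$ on which $I_{astro}$ is smooth (the generic case where $x_2^\star$ is away from the switching value). Verifying that the equilibrium is isolated and that the trajectories near $(x^\star,x_4^\star)$ do not chatter on the switching surface is the delicate point; everything else reduces to the routine scalar and cascade estimates sketched above.
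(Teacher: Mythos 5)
Your proposal is correct and follows essentially the same route as the paper: the paper likewise treats the firing-rate equation as a scalar, input-driven stable system (asserting ISS from boundedness of $I_{astro}$ as a function of $x_2$) and obtains the equilibrium of the extended system by pairing the astrocyte equilibrium from Corollary~1 with the resulting constant $f(\eta I_{astro}-I_{thr})$. Your version is simply more rigorous — the explicit ISS-Lyapunov estimate via Young's inequality, the cascade theorem, and the attention to the non-smoothness of $\Theta(\ln y)\ln y$ are all left implicit in the paper's brief argument.
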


\subsection{Numerical Analysis} \label{sec:3.2}
In this section, we consider three cases to account for various possible scenarios. 
\begin{itemize}
    \item Case 1: No stimulation of astrocyte, i.e., $J_{glu} = 0$
    \item Case 2: Stimulation of astrocyte, i.e., $J_{glu} = 5 \frac{\mu M}{s}$ and strong gliotransmission, i.e., $\eta = 100 \%$ 
    \item Case 3: Stimulation of astrocyte, i.e., $J_{glu} = 5 \frac{\mu M}{s}$ and weak gliotransmission, i.e., $\eta = 25 \%$
\end{itemize}
The two scenarios of a persistently stimulated (Case 2,3) and an unstimulated astrocyte (Case 1) can be achieved by simulating the glutamate dynamics for the different spiking activity of the pre-synaptic neuron (see dynamics in Equation \eqref{eq:neuron}). The results of Proposition 2 and Corollary 1 are verified below. The simulation results show the ultimate boundedness property for all bounded stimulation cases. For the persistent and constant stimulation (Case 2), the states converge to an equilibrium. The short-period pulse of 0.2 seconds ($J_{glu}$) - which is selected in accordance to real experiments with primates \citep{funahashi1989mnemonic} - leads to bounded astrocytic response ($Ca^{2+}$ and $I_{astro}$) within the first 4 seconds. Additionally, a simulation without any input is performed to confirm a positive equilibrium value of the unforced system.

\begin{figure}[h]
    \centering
    \includegraphics[width = 0.99\linewidth]{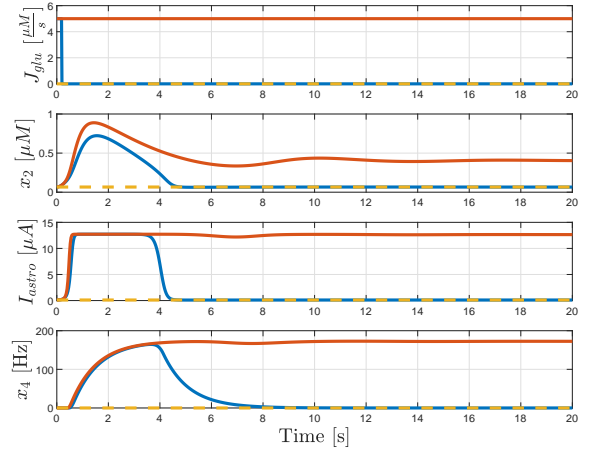}
    \caption[Numerical results for extended astrocyte dynamics]{The extended astrocyte model (Fig. 2) with different glutamate inputs: no input signal (yellow), a persistent input $J_{glu} = 5 \frac{\mu M}{s}$ (red) and a short input (blue).}
    \label{fig:extended_astrocyte1}
\end{figure} 

A second simulation is realized in order to examine and confirm the effect of weak and strong gliotransmission. Fig.~\ref{fig:extended_astrocyte2} shows that both Cases 2 and 3, have identical astrocytic behavior, as expected for the same input, and an entirely different postsynaptic firing frequency. The weak gliotransmission results in no visible neuronal postsynaptic activity although it still causes $I_{astro}>0$ toward the postsynaptic neuron. The reason for this observation is the type 2 neuron dynamics of the Izhikevich model which denotes a class of neurons that show a sudden high firing rate after exceeding an input threshold and no firing rate below the threshold. The weak astrocytic current does still lead to the easier onset of firing in the presence of additional inputs in a more realistic setting. 

\begin{figure}[h]
    \centering
    \includegraphics[width = 0.99\linewidth]{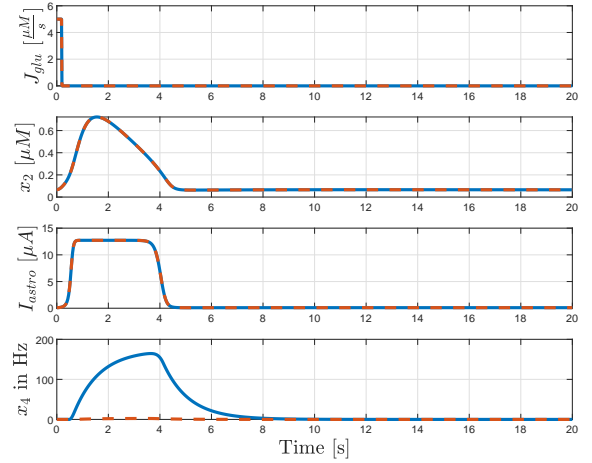}
    \caption[Numerical results for extended astrocyte dynamics]{The input and response of the model in Fig. 2 with: a short impulse signal with weak (red) and strong (blue) gliotransmission.
    }
    \label{fig:extended_astrocyte2}
\end{figure} 

Finally, the actual tripartite model as described in section~\ref{sec:2} is simulated. In this simulation, the whole nonlinear dynamics of neurons and the astrocyte are used. Fig.~\ref{fig:sim_tripartite} shows the firing frequency of the presynaptic and the postsynaptic neuron as well as the astrocytic $Ca^{2+}$ dynamics for the scenario with a short and a persistent input signal of $I_{app} = 100 \mu A$ (see~\eqref{eq:neuron}) for strong gliotransmission. Although the presynaptic neuron is spiking for the initial stimulation phase, the synaptic connection is not strong enough to initiate the firing of the postsynaptic neuron. Strong astrocytic current $I_{astro}$ induced by the enhanced $Ca^{2+}$ concentration within the astrocyte leads to temporal postsynaptic neuronal activity.
 All simulations are conducted using a Runge-Kutta-4 algorithm with a fixed time step $\Delta t = 0.1 ms$. 

\begin{figure}[h]
    \centering
    \includegraphics[width = 0.99\linewidth]{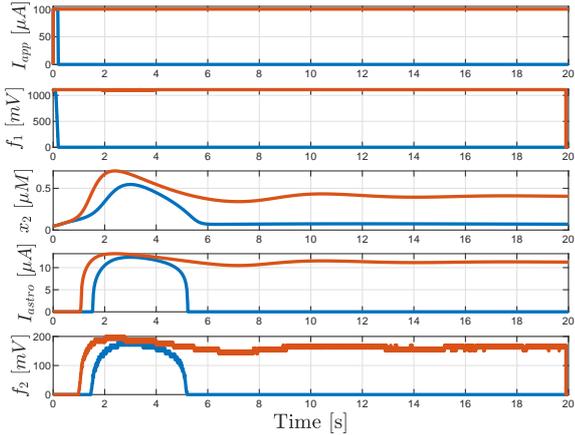}
    \caption[Numerical results for tripartite synapse]{The input and response of the tripartite synapse exposed: the short input, e.g. stimulation, signal (red), and a persistent signal (blue).}
    \label{fig:sim_tripartite}
\end{figure}
\section{Implications of the results for Working Memory: Numerical validation} \label{sec:4}
The application of a neuron-astrocyte network model is tested by expanding the tripartite synapse  - as described in section \ref{sec:2} - into a large-scale network representing the WM. For these purposes, a dual-layer structure is built by neuronal synapses, astrocytic gap junctions to connect astrocytes with each other, and the tripartite synapses as the connection between the neuronal and the astrocytic layer (Fig.~ \ref{fig:dual_layer}). We connected 1296 neurons consisting of $80\%$ excitatory and $20\%$ inhibitory cells following an exponential distribution depending on the distance between them. Four neurons are linked to the spatially closest astrocyte, respectively. 
The inter-astrocyte connections are assumed bidirectional, i.e., via (electrical) synapses (gap junctions) \citep{gordleeva2021modeling}.

\begin{figure}[h]
    \centering
    \includegraphics[width = 0.65\linewidth]{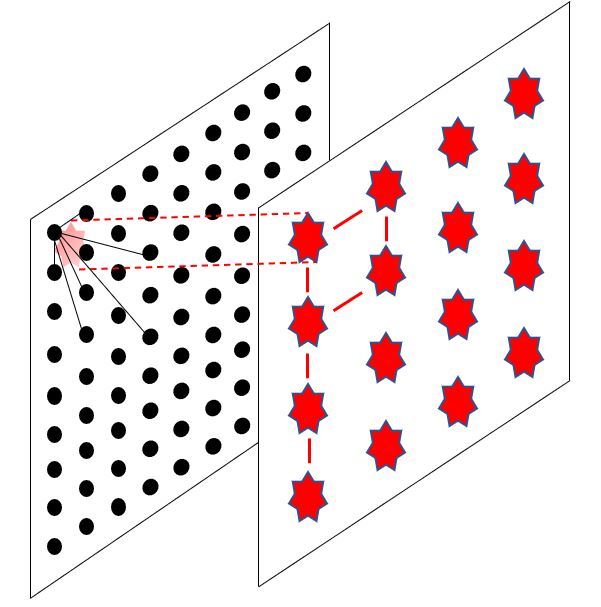}
    \caption[Dual layer network structure]{Dual layer network: The neuronal layer (left) and the astrocytic layer (right) form spatial interconnections (adapted from \citet{gordleeva2021modeling}).}
    \label{fig:dual_layer}
\end{figure}
As mentioned in the introduction, astrocytic gliotransmission and its effect on the postsynaptic neuron is considered a possible mechanism for WM. The simulation sequence reflects real experiments having a sequence of {\em stimulation}, {\em delay period}, and {\em recall}. In order to enhance the recall, a form of original stimulus can be re-introduced to the network which is called a cue. We assume that the selected target neurons receive a stimulating input for the simulation period $t_{stim} = 0.2 s$ which is followed by a delay period $t_{delay} = 2.8 s$. Next, the recall period starts which is accompanied by a weak and noisy recall cue. Simulation results for strong, weak, and no gliotransmission show the following performance. Strong gliotransmission (Fig.~\ref{fig:result_protocol1}) results in persistent enhanced activity of target neurons with an average frequency of approx. 150~Hz compared to non-target neurons at 10 to 20~Hz. In contrast to the other two scenarios, no recall cues are applied here. Although the difference between firing frequencies is strongly enhanced in this simulation, the general effect corresponds to the observed persistent activity of tuned neurons in \citet{funahashi1989mnemonic}.

\begin{figure}[h]
    \centering
    \includegraphics[width = 0.99\linewidth]{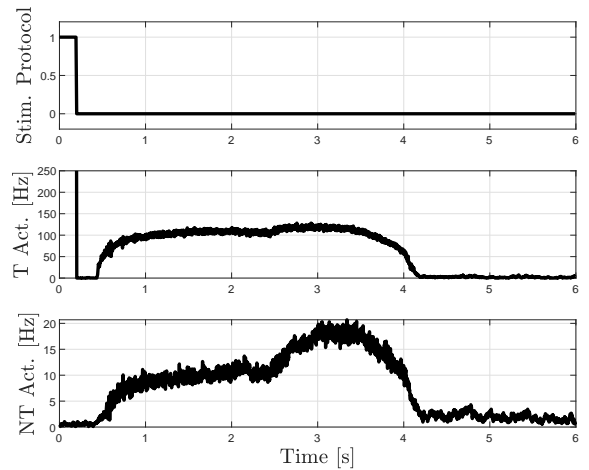}
    \caption[Simulation results - Case 2]{Strong gliotransmission (shot period): The activity of target neurons (T) is significantly higher than non-target neurons (NT).}
    \label{fig:result_protocol1}
\end{figure}

The effect of weak gliotransmission is significant, especially in the second subplot of Fig.~\ref{fig:result_protocol2}. While there is basically no difference in neuronal activity between target and non-target neurons during the delay period, the recall cue shows a significant difference in neuronal reactions. The generally low neuronal activity during the delay period also links to the hypothesis of sparse neuronal activity.

\begin{figure}[h]
  \centering
  \includegraphics[width = 0.99\linewidth]{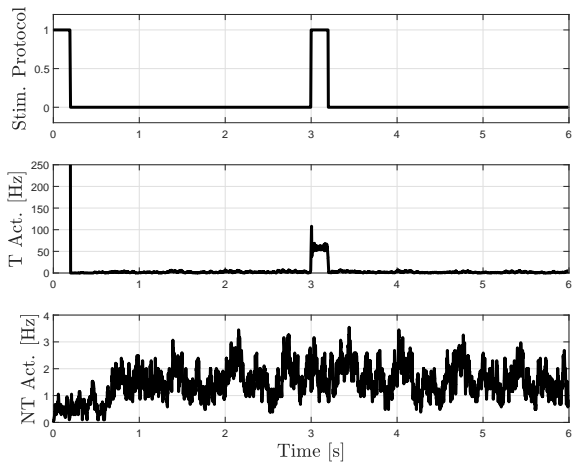}
   \caption[Simulation results - Case 3]{Weak gliotransmission: The activity of target neurons (T) is significantly higher than non-target neurons (NT) for the second cue. Here, the astrocytic gliotransmission is not strong enough to excite the neurons on its own but in combination with a second, weak cue signal.}
 \label{fig:result_protocol2}
\end{figure}

In the last scenario, we removed the effect of gliotransmission in order to examine whether the above-seen WM performance can purely be assigned to astrocytic gliotransmission. The absence of gliotransmission removes every difference in neuronal activity between target and non-target cells which can be translated to a totally dysfunctional WM model (Fig.~\ref{fig:result_protocol2_b}).

\begin{figure}[h]
    \centering
    \includegraphics[width = 0.99\linewidth]{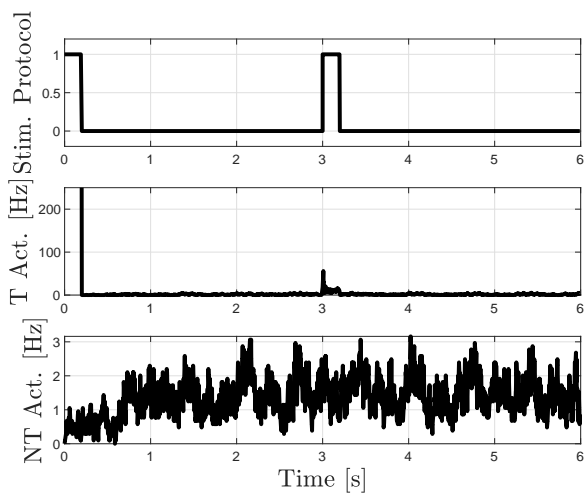}
    \caption[Simulation results - No gliotransmission]{No gliotransmission: The WM task shows sparse activity overall time and no significant recall which indicates poor memory performance.}
    \label{fig:result_protocol2_b}
\end{figure}

\section{Conclusion}\label{sec:5}
Using nonlinear stability analysis for a biologically detailed astrocyte model, we showed that the astrocyte response to a bounded input is bounded and we quantified this bound. Our analysis showed that astrocytic gliotransmission provides a stable activation of the postsynaptic neuron depending on the stimulus applied to the presynaptic neurons. This result indicates the possibility of storing presynaptic neuronal activity. We numerically verified the implication of this result to explain that both persistent and sparse activity of WM can be explained by our analysis. Further research, both in experimental as well as computational aspects, is necessary to examine the possible role of astrocytes in WM tasks. 
\bibliography{bib}             

\appendix
\section{Parameters of the Tripartite Synapse} \label{sec:A1}

Table~\ref{tab:neuron}: Parameters for the neuron model by \citet{izhikevich2003simple} following \citet{gordleeva2021modeling}.

\begin{table}[h]
    \centering
    \begin{tabular}{c|c}
    \textbf{Parameter} &  \textbf{Value} \\
    \hline \hline
    a &  0.1 \\
    \hline 
    b &  0.2 \\
    \hline 
    c &  -65mV \\
    \hline 
    d &  2 \\
    \hline
    $\eta_{syn}$ &  0.025 \\
    \hline
    $E_{syn,E}$ &  0 mV \\
    \hline
    $E_{syn,I}$ &  -90 mV \\
    \hline
    $k_{syn}$ &  0.2 mV
    \end{tabular}
    \caption[Overview of neuronal parameters]{}
    \label{tab:neuron}
\end{table}

Table~\ref{tab:n2}: Parameters for the release dynamics for glutamate dynamics \citep{gordleeva2012bi}.

\begin{table}[h]
    \centering
    \begin{tabular}{c|c}
    \textbf{Parameter}  \textbf{Value} \\
    \hline \hline
    $\alpha_{glu}$ & 10 $s^{-1}$  \\
    \hline 
    $k_{glu}$ & 600 $\mu$M $s^{-1}$ \\
    \hline 
    $A_{glu}$ & 5 $\mu$M $s^{-1}$ \\
    \hline 
    $G_{thr}$ & 0.7 
    \end{tabular}
    \caption[Overview of glutamate release parameters]{}
    \label{tab:n2}
\end{table} 

Table~\ref{tab:n3}: Parameters for the astrocyte model by \citet{li1994equations} with adaptions by \citet{nadkarni2003spontaneous} and \citet{ullah2006anti}. 

\begin{table}[h]
    \centering
    \begin{tabular}{c|c}
    \textbf{Parameter} &  \textbf{Value} \\
    \hline \hline
    $\frac{1}{\tau_{IP_3}}$ &  0.14 $s^{-1}$ \\
    \hline
    $c_0$ & 2.0 $\mu$M \\
    \hline 
    $a_2$ & 0.14 $\mu M^{-1} s^{-1}$\\
    \hline
    $c_1$ & 0.185 \\
    \hline
    $d_1$ & 0.13 $\mu$M \\
    \hline
    $d_2$ & 1.049 $\mu$M \\
    \hline
    $d_3$ & 943.4 nM \\
    \hline
    $d_5$ & 82 nM \\
    \hline
    $d_{Ca}$ & 0.05 $s^{-1}$ \\
    \hline
    $d_{IP3}$ & 0.1 $s^{-1}$ \\ 
    \hline
    $IP_3^{*}$ & 0.16 $\mu$M \\
    \hline 
    $k_1$ & 0.5 $s^{-1}$ \\
    \hline
    $k_2$ & 1 $\mu$M \\
    \hline
    $k_3$ & 0.1 $\mu$M \\
    \hline
    $k_4$ & 1.1 $\mu$M \\
    \hline
    $v_1$ & 6 $s^{-1}$ \\
    \hline 
    $v_2$ & 0.11 $s^{-1}$ \\
    \hline
    $v_3$ & 2.2 $\mu$M $s^{-1}$ \\
    \hline
    $v_4$ & 0.3 $\mu$M $s^{-1}$ \\
    \hline
    $v_6$ & 0.2 $\mu$M $s^{-1}$ \\
    \hline
    $\alpha$ & 0.8
    \end{tabular}
    \caption[Overview of astrocyte parameters]{}
    \label{tab:n3}
\end{table} 

Table~\ref{tab:network}: Parameters for the large-scale neuron-astrocyte WM network.

\begin{table}[h]
    \centering
    \begin{tabular}{c|c}
    \textbf{Parameter} &  \textbf{Value} \\
    \hline \hline
    $N_E$ & 1296 \\
    \hline
    $N_A$ & 324 \\
    \hline
    $N_{E,Syn}$ & 28 \\
    \hline
    $\lambda$ & 5 \\
    \hline
    $N_{A,Syn,min}$ & 2 \\
    \hline
    $N_{A,Syn,max}$ & 4 \\
    \hline
    $r_{E-I}$ & 4
    \end{tabular}
    \caption[Overview of network parameters]{}
    \label{tab:network}
\end{table} 

\section{Derivation of the Extended Astrocyte Model} \label{sec:A2} 

The first step in order to obtain the extended astrocyte model is the continuous estimation of $I_{astro}$. The original definition is given by \citet{nadkarni2003spontaneous}
\begin{equation*}
    I_{x_2}=2.11 \Theta(\ln y) \ln y, \quad y=x_2 / \mathrm{nM} -196.69.
\end{equation*}

Instead, we derive a continuously differentiable estimation by fitting the function 
\begin{equation*}
    f(x_2) = a \tanh \left(b x_2 + c\right) + d)
\end{equation*} 
in a realistic range of $[Ca^{2+}]$ values $x_2 \in [0.05, 0.7]$ $\mu M$ obtained via numerical simulations. The parameter fitting for $a = 6.3611$, $b = 14.682$, $c = -3.3582$ and $d = 6.3611$ is done via Matlab. As it can be seen in Fig.~\ref{fig:I_astro}, the qualitative development of $I_{astro}$ is captured although a considerable error is visible at the onset due to the smoothing process.

\begin{figure}[h]
    \centering
    \includegraphics[width = 0.99\linewidth]{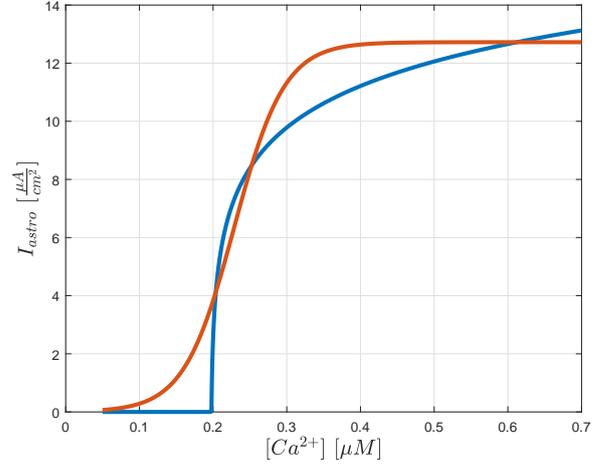}
    \caption[Estimation of $I_{astro}$]{Estimation of $I_{astro}$: The relation between the $Ca^{2+}$ concentration and the astrocytic current is estimated by a $\tanh$-function.}
    \label{fig:I_astro}
\end{figure}

Since $x_4$ should stabilize at zero for $I_{astro}=0$ and otherwise should converge to the firing frequency of the Izhikevich model, we propose the following equation:

\begin{equation*}
    \dot{x}_4 = -x_4 + f(I_{astro}).
\end{equation*}

Considering the sudden onset of the firing, we upgrade the model as follow:

\begin{equation*}
    \dot{x}_4 = -x_4 + 0.5 (tanh(I-I_{thr})+1) f_1(I_{astro}),
\end{equation*}

where the heaviside function $\Theta$ is approximated by a $\tanh$-function and the threshold value $I_{thr} = 3.9 \mu A$. Additionally, considering a linear relation between the input current and the firing frequency, we have

\begin{equation*}
    f_1(I_{astro}) \approx p_1 I_{astro} + p_2
\end{equation*}

with $p_1 = 16.82$ and $p_2 = -40.29$ via the Linear Least Square Method. Lastly, the tuning parameter $\eta$ is introduced in order to model weak ($\eta = 25\%$) or strong astrocytic gliotransmission ($\eta = 100\%$) which gives the following model:

\begin{equation*}
    \dot{x}_4 = -x_4 + 0.5 (\tanh(\eta I_{astro} -I_{thr})+1) (p_1 \eta I_{astro} + p_2).
\end{equation*}

\section{Equilibrium and Eigenvalues} \label{sec:A3} 
The unperturbed, $u=0$, equilibrium of the system is calculated: $x_0 = \begin{bmatrix}
0.6858 & 0.06612 & 0.8882
\end{bmatrix}$. The corresponding eigenvalues of the linearized model are obtained as: $$ \begin{bmatrix}
-4.2324 & -0.12 + 0.023i & -0.12 - 0.023i
\end{bmatrix}.$$ \\
The system's equilibrium under the influence of with the positive input $u=A_{glu}=5$ is $x_0 = \begin{bmatrix}
36.77  & 0.4061 & 0.7165
\end{bmatrix}.$ The corresponding eigenvalues of the linearized model are obtained as: $$ \begin{bmatrix}
-0.14  & -0.27 + 0.89i & -0.27 - 0.89i
\end{bmatrix}.$$








\end{document}